\documentclass[11pt]{article}
\usepackage[utf8]{inputenc}
\usepackage[T1]{fontenc}
\usepackage{tikz}
\usetikzlibrary{shapes.geometric, arrows}
\usepackage{fullpage}

\usepackage{amsmath}
\usepackage{amssymb}
\usepackage{amsthm}
\usepackage{authblk}
\usepackage{comment}
\usepackage{subcaption}

\usepackage{xfrac}
\usepackage{algpseudocode}
\usepackage{algorithm}
\usepackage{tikz}

\usetikzlibrary{positioning}
\usepackage{xcolor}

\pagestyle{plain}

\usepackage{hyperref}
\usepackage{cleveref}
\usepackage{lipsum}

\title{A Critique of Deng's ``P=NP''\thanks{Supported in part by NSF grant 
		CCF-2006496.}}

\author{Isabel Humphreys}
\author{Matthew Iceland}
\author{Harry Liuson}
\author{Dylan McKellips}
\author{Leo Sciortino}
\affil{Department of Computer Science\\University of Rochester\\Rochester, NY 14627, USA}

\newtheorem{theorem}{Theorem}
\newtheorem{lemma}{Lemma}

\newtheorem{claim}{Claim}
\newtheorem{definition}{Definition}

\date{July 11, 2025}

\begin{document}\sloppy

\maketitle

\begin{abstract}
In this paper, we critically examine Deng's ``P=NP'' \cite{deng2024pnp}. The paper claims that there is a polynomial-time algorithm that decides 3-coloring for graphs with vertices of degree at most 4, which is known to be an NP-complete problem. Deng presents a semidefinite program with an objective function that is unboundedly negative if the graph is not 3-colorable, and a minimum of 0 if the graph is 3-colorable. Through detailed analysis, we find that Deng conflates subgraphs with induced subgraphs, leading to a critical error which thereby invalidates Deng's proof that $\text{P}=\text{NP}$.
\end{abstract}

\section{Introduction}

The $\text{P} = \text{NP}$ problem is widely viewed as the most important problem in theoretical computer science. Essentially, the $\text{P} = \text{NP}$ problem asks whether all problems whose solutions can be verified efficiently (i.e., in polynomial time) can also be solved efficiently. One way to prove that $\text{P} = \text{NP}$ would be to give a polynomial-time algorithm for one of the many known NP-complete problems. However, no such algorithm has been discovered.

Holyer \cite{doi:10.1137/0210055} showed that determining whether a graph is 3-colorable is NP-complete, even when the degree of each vertex is at most 4. Deng proposes a polynomial-time reduction from such a graph to a certain convex optimization problem in Deng's Theorem 1.1 \cite{deng2024pnp}. The convex optimization problem can be solved in polynomial time. Since the 3-coloring problem is NP-complete, if the reduction were valid, all problems in NP could then be polynomial-time reduced to the semidefinite programming problem, proving that $\text{P} = \text{NP}$.

Specifically, Deng claims that a given graph with vertices of degree at most 4 is 3-colorable if and only if the semidefinite program of Theorem 1.1 is bounded, meaning that if the program is unboundedly negative, then the graph is not 3-colorable. If valid, Deng's work would resolve one of the most important open questions in computer science, with profound implications for fields ranging from cryptography to algorithm design. However, we find a critical fault in the use of lemma 3 to prove Theorem 1.1, which invalidates the proof's correctness.

In Section \ref{preliminaries}, we provide necessary background information. In Section \ref{dengsarguments}, we outline Deng's arguments, and finally in Section \ref{critique}, we give our critique.

\section{Preliminaries} \label{preliminaries}
In this paper, we expect a basic understanding of the complexity classes P and NP\@. Although not specifically needed to understand the paper, this provides context for the critique. The reader can obtain a comprehensive overview in Sipser's textbook \cite{sipser13}.

We say a graph is specified as $G=(V, E)$, where $V$ is a set of $n$ vertices $\{v_1, \dots, v_n\}$ and $E$ is a set of edges $\{(v_i, v_j) | v_i, v_j \in V\}$ \cite{sipser13}. We define the language $\text{3COLOR} = \{\langle G \rangle \ | \ G \text{ is colorable with 3 colors}\}$. A graph $G = (V, E)$ is 3-colorable if and only if there is a function $\phi: V \rightarrow  \{\text{red}, \text{green} , \text{blue}\}$ such that $\phi(u) \neq \phi(v)$ for all pairs of adjacent vertices (vertices such that $(u, v) \in E$). The problem of deciding whether a graph is 3-colorable is known to be NP-complete \cite{doi:10.1137/0210055}. Furthermore, restricting this problem to the set of graphs with vertex degree (number of edges attached to a given vertex) at most 4 is also known to be NP-complete, and it is this problem that Deng claims to solve. 

\subsection{Semidefinite matrices}
For a matrix $X$, we say $X \succeq 0$ if and only if $X$ is symmetric and positive semidefinite. We say a matrix $M\in \mathbb{R}^{n \times n}$  is positive semidefinite if and only if for all $x \in \mathbb{R}^n$,  $x^TMx \geq 0$. Equivalently, $M\in \mathbb{R}^{n \times n}$ is positive semidefinite if and only if all eigenvalues are nonnegative. 

\subsection{Semidefinite Programming}
Programs of the following form are said to be semidefinite programming problems \cite{freund2004sdp}: \\

Let $C$ and $X$ be symmetric $n \times n$ matrices, and let $$C \bullet X = \sum_{i=1}^{n}\sum_{j=1}^n C_{i, j}X_{i, j}.$$
A semidefinite program (SDP) optimization problem is of the following form.

\begin{align*}
    \textit{minimize}\quad &C \bullet X  \\
    \text{such that} \quad &A_i \bullet X = b_i, \quad 1\leq i \leq m \\
    &X \succeq 0.
\end{align*}
Each $A_i$ is an $n \times n $ symmetric matrix, and $A_i \bullet X$ forms a linear constraint to $b_i \in \mathbb{R}$. Therefore, there are $m$ linear constraints. We say that $X$ is the variable matrix, as $C$ and $A_i$ are fixed, and we are trying to find $X$ such that the objective function $C \bullet X$ is minimized with $X \succeq 0$.

Given this program, the corresponding dual is 
\begin{align*}
\textit{maximize} \quad & \sum y_i b_i \\
\text{such that} \quad & \sum (y_i A_i) + S = C\\
& S \succeq 0,
\end{align*}
where $y_i$ are variables and $S$ is a matrix.

\section{Deng's Arguments} \label{dengsarguments}

\subsection{Overview}
\begin{figure}
    \centering
   \tikzstyle{startstop} = [rectangle, rounded corners, minimum width=2cm, minimum height=1cm,text centered, draw=black]
\tikzstyle{process} = [rectangle, minimum width=2cm, minimum height=1cm, text centered, draw=black]
\tikzstyle{arrow} = [thick,->,>=stealth]
\begin{tikzpicture}[node distance=2cm]
\node (lemma5) [process] {Lemma 5};
\node (thm11) [process, below of=lemma5, yshift=0cm] {Thm 1.1};
\node (lemma1) [process, left of=lemma5, xshift=-6cm, yshift=0cm] {Lemma 1};
\node (lemma2) [process, below of=lemma1] {Lemma 2};
\node (thm32) [process, below of=lemma2] {Thm 3.2};
\node (thm33) [process, left of=thm11, xshift=-1.5cm] {Thm 3.3};
\node (lemma4) [process, left of=lemma5, xshift=-1.5cm, yshift=0cm] {Lemma 4};
\node (lemma3) [process, right of=lemma5, xshift=1.5cm, yshift=0cm] {Lemma 3};

\node (thm31) [process, right of=thm11, xshift=3cm] {Thm 3.1};
\node (pnp) [process, below of=thm11] {$P = NP$};

\draw [arrow] (lemma1) -- (lemma2);
\draw [arrow] (lemma2) -- (thm32);
\draw [arrow, <->] (thm33) -- (thm32) node[midway, above, sloped] {Equiv. by duality};
\draw [arrow] (lemma4) -- (thm11);
\draw [arrow] (lemma5) -- (thm11);
\draw [arrow] (lemma3) -- (thm11);
\draw [arrow] (thm33) to (thm11);
\draw [arrow, <->] (thm31) -- (thm11) node[midway, above, sloped] {Equiv. by duality};
\draw [arrow] (thm11) -- (pnp);

\end{tikzpicture}
    \caption{Relationship between Deng's theorems and lemmas}
    \label{fig:proof-diagram}
\end{figure}
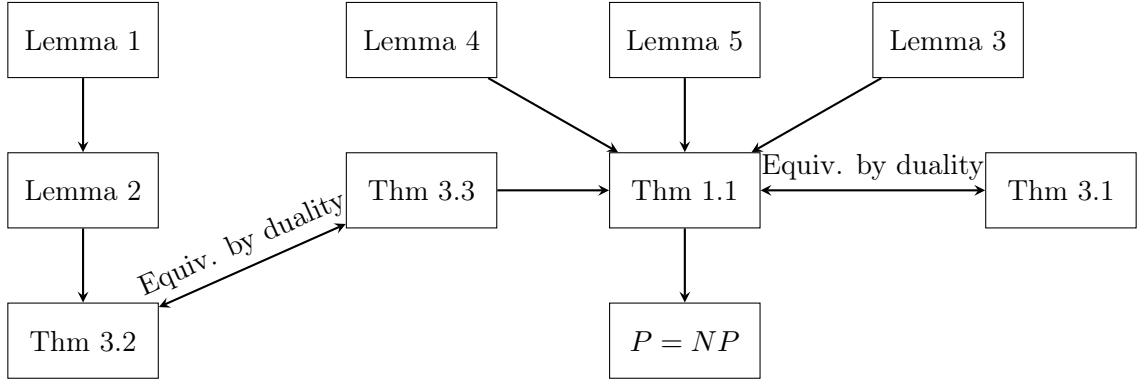

Figure \ref{fig:proof-diagram} illustrates the relationships between the theorems and lemmas of Deng's paper. For any two nodes $a$ and $b$ in the graph, if there is a directed edge from node $a$ to node $b$, then $a$ is necessary to prove $b$. The correctness of a node can be established only if all incoming nodes are correct.

Each of Theorem 1.1, Theorem 3.1, Theorem 3.2, and Theorem 3.3 gives a programming problem, including an objective function and a set of constraints. Theorems 1.1 and 3.1 are equivalent by duality theory, and Theorems 3.2 and 3.3 are likewise equivalent. For reference throughout our paper, Deng uses $R(G)$ to denote the programming problem of Theorem 1.1. As we describe in the following sections, Lemma 3 is used erroneously to prove one of the claims of Theorem 1.1.

\subsection{Theorem 1.1}
We begin with the central construction of the paper. Consider a graph $G = (V,E)$, of maximal degree 4. Let $n = \|V\|$.
Construct two relevant matrices: $D, P \in M_{3n+1}(\mathbb{R})$. Each is divided, excluding the last row and column, into $n^2$ submatrices of size $(3 \times 3)$ that will be referred to  as $D_{i,j}$ and $P_{i,j}$. The last row and column are defined as $(1 \times 3)$ submatrices for the row, and $(3 \times 1)$ submatrices for the column, with the shared corner being a $(1 \times 1)$ submatrix. 
Let the constraints $\Phi(G)$ be as follows:
\begin{enumerate}
    \item $D$ and $P$ are symmetric.
    \item $P$ has nonpositive values.
    \item $P + D \succeq 0$. 
    \item For all $i,j$, if $(v_i, v_j) \not \in E$, all nine entries in $D_{i,j}$ will be equal. 
    \item For all $i,j$, if $(v_i, v_j) \in E$, all nine entries in $P_{i,j} = 0.$
\end{enumerate}

Define the function $f(D) = C \bullet D$ where $$C_{i,j} = \left\{\begin{array}{lr}
     6 & \text{if }i = 3n+1 \text { AND } j = 3n+1\\
     4 & \text{if }i = 3n+1 \text{ XOR }j = 3n+1\\
     1 & \text{otherwise.}
\end{array}\right.$$

\begin{definition}
The Deng-semi-definite programming problem for graph $G$ is defined as $\min_{D,P} f(D)$ subject to the constraints $\Phi(G).$
\end{definition}

Deng claims that his program decides 3-coloring for graphs of vertex degree 4 or less in polynomial time. He claims to prove the following:
\begin{claim}
    \label{dsuff}
    If a graph $G$ is 3-colorable, then  $ \min_{D, P} f(D) = 0$.
\end{claim}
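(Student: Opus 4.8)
The statement asserts that the optimum equals $0$, which I would split into an upper bound $\min_{D,P} f(D) \le 0$ and a matching lower bound $\min_{D,P} f(D) \ge 0$. The upper bound is immediate and uses no hypothesis on $G$: the pair $(D,P) = (0,0)$ satisfies all five constraints of $\Phi(G)$ (symmetry, nonpositivity of $P$, $P+D = 0 \succeq 0$, and the two block conditions, each vacuously or trivially), and it gives $f(0) = C \bullet 0 = 0$. Consequently the entire content of the claim is the lower bound $f(D) \ge 0$ for every feasible $(D,P)$, and this is the step that must invoke 3-colorability: were $G$ not 3-colorable the program would instead be unbounded below, so the inequality cannot hold on the feasible region of an arbitrary graph.

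Before building a certificate I would record a structural simplification of the objective. Writing $e := e_{3n+1}$ for the last standard basis vector and $g := (1,\dots,1,4)^{T} \in \mathbb{R}^{3n+1}$, one checks entrywise that $C = gg^{T} - 10\,ee^{T}$, so that $f(D) = C \bullet D = g^{T} D g - 10\,D_{3n+1,3n+1}$. Using $M := D + P \succeq 0$, $g \ge 0$, and $P \le 0$ one gets $g^{T}Dg = g^{T}Mg - g^{T}Pg \ge 0$, since $g^{T}Mg \ge 0$ and $g^{T}Pg = \sum_{k,l} P_{k,l}\, g_k g_l \le 0$. However the stray term $-10\,D_{3n+1,3n+1}$ (with $D_{3n+1,3n+1} \ge 0$ because the corner of $M$ is nonnegative and that of $P$ is nonpositive) blocks the conclusion $f(D) \ge 0$: a single test vector against $M$ is not enough, so a genuine dual certificate is required.

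The plan for the lower bound is therefore weak duality. Reformulating the program with conic variable $M \succeq 0$ and polyhedral variable $P \le 0$ (and $D = M - P$), every linear equality coming from constraints 4 and 5 has right-hand side $0$ (a difference of block entries, or a whole block, set to zero), as does every sign constraint; hence any dual-feasible solution has objective value exactly $0$, and by weak duality exhibiting one proves $\min_{D,P} f(D) \ge 0$. I would construct such a dual solution from the coloring $\phi$: fix three unit vectors $t_1,t_2,t_3 \in \mathbb{R}^3$ with $t_a \cdot t_b = -\tfrac12$ for $a \ne b$, and assemble the dual slack $S$ blockwise so that block $(i,j)$ is determined by $t_{\phi(v_i)}$ and $t_{\phi(v_j)}$. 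Because this makes $S$ a Gram matrix it is automatically $S \succeq 0$, and because $\phi$ is a proper coloring every edge $(v_i,v_j)$ joins differently coloured vertices, so the corresponding block reflects the negative inner product $-\tfrac12$ — exactly the sign permitted by the dual of the constraint $P \le 0$. The remaining multipliers are then chosen so that $\sum_i y_i A_i + S = C$.

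The step I expect to be the main obstacle is verifying that this $S$ is simultaneously positive semidefinite and consistent with $\sum_i y_i A_i + S = C$ on every block. Positive semidefiniteness couples all $3\times 3$ blocks at once, including those indexed by non-adjacent pairs $(v_i,v_j) \notin E$, whereas the coloring only directly controls the blocks indexed by edges. Reconciling the two — confirming that the non-edge blocks, which encode the induced structure of $G$ rather than merely its edge set, are compatible with both $S \succeq 0$ and the uniformity constraints — is the crux. Given the paper's thesis, this is precisely the point at which treating the edge constraints as if they pinned the whole (induced) neighbourhood structure would fail, so a correct argument must verify the non-edge blocks explicitly rather than inferring them from the edges.
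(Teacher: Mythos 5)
The paper never proves this claim --- it is Deng's Claim, and we only assert that we ``find no issue with [it] and believe it to be correct'' --- so there is no proof of record to compare your route against. Judged on its own terms, your proposal is sound in its easy half and incomplete in its hard half. The upper bound is fine: $(D,P)=(0,0)$ is feasible for $\Phi(G)$ for \emph{every} graph and gives $f=0$, so the whole content is indeed the lower bound, and your observation that $C=gg^{T}-10\,ee^{T}$ with $g=(1,\dots,1,4)^{T}$ is a correct and useful normalization (you are also right that it cannot close the argument by itself, since $C$ is not positive semidefinite).

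The gap is that the lower bound is never actually established. You announce a weak-duality certificate built from the vectors $t_{1},t_{2},t_{3}$ with pairwise inner product $-\tfrac12$, but you do not write down $S$ or the multipliers $y_i$, and you explicitly defer the two facts that constitute the entire proof: that $S\succeq 0$ is compatible with the block equations on the non-edge blocks, and that $\sum_i y_iA_i+S=C$ can be met. A plan that names its own crux and leaves it unresolved is not a proof. There is also a second, unacknowledged obligation: the program has \emph{two} matrix variables, and the objective involves only $D$. After your substitution $D=M-P$ with $M\succeq 0$ and $P\le 0$, Lagrangian stationarity produces one matrix equation from the $M$-direction and a separate one from the $P$-direction (involving the multipliers for $P\le 0$ and for constraint 5); you only state the first. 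Whether one $S$ can satisfy both simultaneously --- in particular on the blocks indexed by non-adjacent pairs, where constraint 4 forces all nine entries of $D_{i,j}$ to agree while the coloring gives you no control --- is exactly the open step, and until it is carried out the claim $\min_{D,P}f(D)\ge 0$ remains unproved.
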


\begin{claim}
    \label{dnecc}
    If a graph $G$ is not 3-colorable, then $f(D)$ is unboundedly negative.
\end{claim}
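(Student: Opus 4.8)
The plan is to establish unboundedness by producing a recession direction of the feasible region along which the objective decreases without bound. Concretely, I would search for a pair of symmetric matrices $(\Delta D, \Delta P)$ such that, for every $t \geq 0$, the shifted point $(D_0 + t\,\Delta D,\; P_0 + t\,\Delta P)$ is feasible for $\Phi(G)$ — starting, say, from the trivial feasible point $D_0 = P_0 = 0$ — while $C \bullet \Delta D < 0$. Reading the constraints off along the ray, such a direction must satisfy: $\Delta P$ has nonpositive entries (constraint 2); $\Delta P_{i,j} = 0$ on every edge block (constraint 5); $\Delta D_{i,j}$ is constant on every non-edge block (constraint 4); and $\Delta D + \Delta P \succeq 0$, so that positive semidefiniteness survives as $t \to \infty$ (constraint 3). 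Any such $(\Delta D, \Delta P)$ with negative objective immediately gives $f(D_0 + t\,\Delta D) = t\,(C \bullet \Delta D) \to -\infty$, proving the claim.

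The real work is to manufacture this direction from the hypothesis that $G$ is not $3$-colorable. I would approach it through the vector-coloring viewpoint that presumably underlies \Cref{dsuff}: a feasible $D$ of objective value $0$ encodes, via its $3 \times 3$ blocks, an assignment of color vectors in which adjacent vertices are forced apart, with the last coordinate pinning down a reference configuration. When $G$ is $3$-colorable these constraints are simultaneously satisfiable at the optimum $0$; when it is not, they become over-determined, and the natural hope is to convert this frustration into a feasible direction of decrease. I would try to build $(\Delta D, \Delta P)$ from a certificate of non-colorability — for instance a $4$-chromatic (critical) subgraph $H \subseteq G$ — placing a positive semidefinite ``defect'' on the coordinates of $H$, using $\Delta P$ to absorb the sign constraints on the edge blocks, and then verifying $C \bullet \Delta D < 0$.

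The main obstacle, which I expect to be decisive, is precisely this last assembly step: checking that the direction built from a non-colorable witness actually respects \emph{all} of $\Phi(G)$ on the full graph. The constraints are acutely sensitive to the exact edge set — constraint 4 forces $\Delta D_{i,j}$ to be constant on every non-edge of $G$, while constraint 5 forces $\Delta P_{i,j} = 0$ on every edge of $G$ — so a direction designed on $H$ cannot be read off in isolation. Any vertex pair that is non-adjacent in $H$ but adjacent in $G$ (or vice versa) switches which constraint applies, and edges of $G$ lying outside $H$ still impose their block constraints. The argument therefore goes through only if $H$ is an \emph{induced} subgraph whose edges agree with $G$ on $H$'s vertex set; if one instead reasons with an arbitrary (non-induced) non-colorable subgraph, the constructed $(\Delta D, \Delta P)$ will silently violate constraint 4 or constraint 5 on the ambient graph, so that $\Delta D + \Delta P \succeq 0$ need not hold and the claimed ray fails to be feasible. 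I would scrutinize this subgraph-versus-induced-subgraph passage most carefully, since it is exactly here that the reduction from non-colorability to an unbounded ray is most likely to collapse.
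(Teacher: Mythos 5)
Your proposal does not prove the claim, and within the framework of this paper it cannot: the statement is Deng's Claim~\ref{dnecc}, and the entire point of Section~\ref{critique} is that it remains \emph{unestablished}. The obstacle you flag in your final paragraph --- that a recession direction built from a non-3-colorable witness $H$ respects the block constraints of $\Phi(G)$ only when $H$ is a vertex-induced subgraph whose edge set agrees with that of $G$ on $H$'s vertices --- is precisely the error the paper locates in Deng's use of his Lemma~3. Deng needs a \emph{vertex-induced} subgraph $K$ of $G$ that is a D-graph (4-edge-critical, maximum degree at most 4) in order to assemble the D-coloring matrices $A_i$, $B_i$, $L_{e_j}$ and hence the matrices $W$ and $Y$ that drive his contradiction. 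While every non-3-colorable graph of maximum degree 4 has a \emph{subgraph} that is a D-graph (delete non-critical edges until only critical ones remain), it need not have a vertex-induced one: the paper exhibits the Moser spindle with an added edge between two degree-3 vertices, and more generally the infinite Thomas--Walls family $F_n$ augmented by one such edge, all of which are 4-chromatic with maximum degree 4 yet have no vertex-induced D-graph subgraph. So the ``induced'' hypothesis you correctly identify as necessary is genuinely unavailable in general, and the route you sketch collapses exactly where you predicted it would.

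To be clear about what your write-up accomplishes: you have not supplied the missing assembly step, you have only (correctly) diagnosed that it is the crux. Your first paragraph --- reducing unboundedness to a feasible recession direction $(\Delta D, \Delta P)$ with $C \bullet \Delta D < 0$, and reading off which of the constraints of $\Phi(G)$ the direction must satisfy along the ray --- is a sound framing of what any proof of the claim must deliver. But the later paragraphs never produce such a direction; they terminate at the observation that the natural construction from a critical subgraph is blocked by the induced/non-induced distinction. Since the paper shows this distinction cannot be finessed (each $F_n$ is 4-vertex-critical, so \emph{every} proper vertex-induced subgraph of it is 3-colorable and therefore useless as a witness of non-colorability), your proposal should be read not as a proof of Claim~\ref{dnecc} but as an independent rediscovery of why Deng's proof of it fails. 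The claim itself is left unresolved by the paper, and by you.
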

Combined, Claim~\ref{dsuff} and Claim~\ref{dnecc} imply that $\min f(D)$ is either zero or unboundedly low. This implies a large (indeed, infinite) separation between the minima of $f$ for 3-colorable and non-3-colorable graphs. We find no issue with Claim~\ref{dsuff} and believe it to be correct. However, we take issue with the proof of Claim~\ref{dnecc}, which we will elaborate on below. The most impactful error arises from the use of Deng's Lemma 3 to prove Theorem 1.1.

\subsection{Convexity}
We note that the program defined in Theorem 1.1 of Deng's paper is not actually a semidefinite program, because it includes the constraint that the sum of two matrices is positive semidefinite, and as far as these authors are aware, there is no way of representing the ``Deng SDP'' in the common formulation shown in Section \ref{preliminaries}. However, since it is expressed purely in terms of linear constraints and linear matrix inequalities, it is nonetheless a convex optimization problem, for which there are known polynomial-time algorithms. This does not significantly change the analysis, but it is worth noting.

\subsection{Theorems 3.2 and 3.3}

We now provide a brief overview of Theorems 3.2 and 3.3 of Deng's paper, which involve the construction of copositive and completely positive programs. Definitions of copositive and completely positive matrices are given below.

\begin{definition}
    A copositive program  has the following form:
    \begin{align*}
    \textit{minimize}\quad &C \bullet X  \\
    \text{such that} \quad &A_i \bullet X = b_i, \quad 1\leq i \leq m \\
    &X \in \mathcal{C},
\end{align*}
where $\mathcal{C}$ is the set of copositive matrices defined below, in which $S$ is the set of all symmetric matrices: \begin{equation*}
    \mathcal{C} =  \{ M \in S \; |\; (\forall x \in \mathbb{R}^{n}_{\ge0})\; [xMx^T \geq 0] \}.
\end{equation*}
\end{definition}

\begin{definition}
    A completely positive program has the form:
    \begin{align*}
        \textit{minimize}\quad &C \bullet X  \\
        \text{such that} \quad &A_i \bullet X = b_i, \quad 1\leq i \leq m \\
        &X \in \mathcal{CP}.
    \end{align*}
    Where $\mathcal{CP}$ is the set of completely positive matrices defined below, in which $S$ is the set of all symmetric matrices: \begin{equation*}
    \mathcal{CP} =  \{ M \in S \; |\; (\forall D \in \mathbb{R}^{n \times n}_{\ge0})\; [M=D^{T}D] \}
    \end{equation*}
\end{definition}

Deng constructs a copositive program in Theorem 3.2 defined as $K(G)$ for some graph $G$, and claims that if $G$ is a D-graph, $K(G)$ will have an unbounded value. He then constructs its dual completely positive program $K^*(G)$ in Theorem 3.3, which states that any D-graph will not be able to satisfy the constraints of $K^*(G)$. Deng uses these programs to derive a necessary contradiction in his proof of Theorem 1.1.

\section{Critique of Deng's Lemma 3} \label{critique}

\subsection{Relevant definitions}

For a graph $G = (V,E)$, we define a \emph{subgraph} $G' = (V', E')$ to be a graph such that $E'\subseteq E$ and $V' \subseteq V$ and if $u,v \in V - V'$ then $(u,v) \notin E'$. For a graph $G = (V,E)$, we define a \emph{vertex-induced subgraph} $G' = (V', E')$ to be a graph such that $V' \subseteq V$ and $E = \{(u,v) \in E \mid u,v \in V'\}$. That is, all edges are present except for those which connect to one or more removed vertices. 

A proper coloring of a graph is a function $\phi : V \to \{0, \dots, k-1\}$  such that if $(u,v) \in E$ we have $\phi(u) \neq \phi(v)$. Furthermore, based on the parameter $k$ we call this a $k$-coloring. From now on, when we state \emph{coloring} we mean \emph{proper coloring}. The chromatic number of a graph is the smallest number of colors needed to color the vertices of a graph such that no two adjacent vertices have the same color.

We define a graph $(V, E)$ to be \emph{4-edge-critical} when it has chromatic number 4, and for any $E' \subsetneq E$, the edge-induced subgraph $(V, E')$ has lesser chromatic number. We define a graph $(V, E)$ to be \emph{4-vertex-critical} when it has chromatic number 4, and for any $V' \subsetneq V$, the vertex-induced subgraph $(V', E')$, with $E' = \{(u,v) \in E \mid u,v \in V'\}$, has lesser chromatic number. Deng defines a \emph{D-graph} (defined in Theorem 3.2) to be a 4-edge-critical graph with maximum degree 4 or less. Since we are only considering graphs with maximum degree 4 or less. For the remainder of this section we use the term D-graph interchangeably with 4-edge-critical.

The corresponding dual program to $R(G)$ will be referred to as $R^*(G)$, and $Z(G)$ is the matrix satisfying the constraints of $R^*(G)$ as defined in Theorem 3.1.

\subsection{Deng's Lemma 3}
Deng proposes Lemma 3, which we restate as follows.
\begin{claim}[\cite{deng2024pnp}, Lemma 3]
\label{deng-lemma-3}
For any graph G of chromatic number greater than 3, and maximum degree four, there exists a vertex-induced subgraph \footnote{In Lemma 3 of Deng's paper, $K$ is referred to as merely a subgraph of $G$. However, from Deng's proof of Lemma 3, it can be inferred that $K$ is a vertex-induced subgraph.} K that is a D-graph. If $R^*(K)$ does not have a constraint-satisfying solution, then neither does $R^*(G)$. 
\end{claim}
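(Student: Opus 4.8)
The plan is to prove Claim~\ref{deng-lemma-3} in two logically separate parts, mirroring the two sentences of the statement. The first part is a purely graph-theoretic existence claim: every graph $G$ of chromatic number at least $4$ and maximum degree at most $4$ contains a vertex-induced subgraph $K$ that is a D-graph (i.e., $4$-edge-critical with maximum degree $\le 4$). The second part is the conditional implication about the dual programs: if $R^*(K)$ has no constraint-satisfying solution, then neither does $R^*(G)$.

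For the existence part, I would argue by a minimality/extraction argument. Since $G$ has chromatic number $\ge 4$, start from $G$ and repeatedly delete vertices so long as the chromatic number stays $\ge 4$; because $G$ is finite this terminates in a vertex-induced subgraph $K$ that is vertex-minimal subject to having chromatic number $\ge 4$. One checks that such a $K$ in fact has chromatic number exactly $4$ and is $4$-vertex-critical: removing any vertex drops the chromatic number below $4$ by minimality. The degree bound is inherited automatically, since deleting vertices from a graph of maximum degree $\le 4$ only removes edges and hence cannot raise any degree. The delicate point here is that the definitions distinguish \emph{$4$-vertex-critical} from \emph{$4$-edge-critical}, and the paper identifies a D-graph with the \emph{edge}-critical notion; so I would either (a) verify that the extracted vertex-critical graph is also edge-critical, or (b) pass to a further edge-minimal subgraph. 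This is exactly where I expect the main obstacle to lie, and indeed it is the crux of the authors' critique: a vertex-minimal witness need not be edge-minimal, so conflating the two notions of ``subgraph'' and ``criticality'' is precisely the gap. Any honest proof must confront the fact that deleting \emph{edges} to reach edge-criticality may leave a graph that is no longer a \emph{vertex}-induced subgraph of $G$, breaking the containment $K \subseteq G$ that the second half of the argument relies on.

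For the conditional part, the natural strategy is contrapositive combined with a restriction/extension argument on dual solutions. Assume $R^*(K)$ has no constraint-satisfying solution; I want to show $R^*(G)$ has none either. The intended mechanism is that any matrix $Z(G)$ satisfying the constraints of $R^*(G)$ can be restricted to the index set corresponding to the vertices of $K$ to yield a matrix satisfying the constraints of $R^*(K)$. Concretely, since $K$ is a vertex-induced subgraph, the edges of $K$ are exactly the edges of $G$ among vertices of $K$, so the block structure of the constraints $\Phi(\cdot)$ restricts cleanly: the principal submatrix indexed by $K$'s vertices (together with the final row/column) inherits symmetry, nonpositivity, the semidefiniteness of $P+D$ (principal submatrices of PSD matrices are PSD), and the edge/non-edge block constraints, precisely because ``induced'' guarantees the adjacency relation is preserved. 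Thus a solution for $R^*(G)$ would produce one for $R^*(K)$, contradicting the hypothesis.

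Assembling these, the overall skeleton is: extract $K$ by vertex-minimality; confirm its chromatic number, criticality, and degree bound; then show restriction carries $R^*(G)$-feasible matrices to $R^*(K)$-feasible matrices, so infeasibility of $R^*(K)$ forces infeasibility of $R^*(G)$. The restriction step is where the \emph{induced} hypothesis is indispensable and where the argument works smoothly; by contrast, the existence step is where I anticipate the real difficulty, because forcing the witness to be simultaneously a vertex-induced subgraph \emph{and} edge-critical is not automatic. I would flag that if one only obtains an edge-critical subgraph by deleting edges, it is generally not vertex-induced, and the clean restriction argument above no longer applies — which is exactly the tension the paper is built around.
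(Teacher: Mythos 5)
Your instinct about where the difficulty lies is exactly right, but the conclusion to draw from it is stronger than ``this step is delicate'': the existence half of the claim is \emph{false}, and the paper's treatment of this statement is a refutation, not a proof. Your extraction argument produces a vertex-induced subgraph $K$ that is $4$-\emph{vertex}-critical, whereas a D-graph must be $4$-\emph{edge}-critical, and neither of your proposed repairs can succeed. Option (a) fails because a vertex-minimal witness need not be edge-minimal; option (b) fails because deleting edges destroys the vertex-induced property, so the resulting edge-critical graph is in general not an induced subgraph of $G$ at all. The paper exhibits concrete counterexamples: take any D-graph with two or more vertices of degree at most $3$ (e.g., the Moser spindle) and add an edge between two such vertices. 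The result still has chromatic number $4$ and maximum degree at most $4$, but it is not itself a D-graph (the new edge is not critical), and it has no vertex-induced D-subgraph, since deleting any vertex leaves a $3$-colorable graph. The paper then extends this to an infinite family via the Thomas--Walls graphs $F_n$, which are $4$-vertex-critical with an abundance of degree-$3$ vertices, so the same edge-addition trick applies at every size.

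Your second part (restricting a feasible dual solution to the principal submatrix indexed by the vertices of $K$) is the portion the paper does not contest, and it is also why the induced/non-induced distinction is fatal rather than cosmetic: that restriction argument genuinely requires $K$ to be vertex-induced, so one cannot quietly substitute the edge-deleted D-subgraph (which always exists) for a vertex-induced one (which need not exist). In short, no proof of the claim as stated can be completed; the correct deliverable here is the counterexample, not a proof.
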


This statement is incorrect. The error lies in the fact that Deng implicitly describes a vertex-induced subgraph in his proof by using selected indices of the adjacency matrix to construct $K$. Constructing a general subgraph would involve changing values of the adjacency matrix instead. It is true that any non-3-colorable graph with degree at most four has a subgraph that is a D-graph (to see this, remove all non-critical edges until you are left with only critical edges), and it is true that every 4-colorable graph has a vertex induced subgraph that is 4-vertex-critical (to see this, remove non-critical vertices until we are only left with critical vertices), but it is not the case that every non-3-colorable graph has a  \textit{vertex-induced} subgraph that is a D-graph.

To produce a counterexample, take a D-graph with 2 or more vertices of degree 3 or less, and add an edge between any pair of these vertices.  Then, this is not a D-graph, since you have added an edge that is not critical (since we can remove it and will be left with a D-graph). Furthermore, it does not have any vertex-induced subgraph that is a D-graph, since if you were to remove any vertex (that does not disconnect the graph), you would be left with a 3-colorable graph (this follows from the definition of D-graph), which is not a D-graph. Note that adding such an edge will leave us with a 4-vertex-critical graph.

We can use the same construction for 4-vertex-critical graphs containing 2 or more vertices of degree 3 or less, which may or may not be D-graphs. These cannot have D-graphs as vertex-induced subgraphs, as any vertex induced subgraph is 3 colorable. If such a 4-vertex-critical graph is a D-graph, use the prior construction. If not, then we are done, since they are not D-graphs nor do they have a D-graph as a vertex-induced subgraph.

Consider Figure \ref{fig:moser} to illustrate this point. We start with a D-graph, the Moser spindle, and add the highlighted red edge. Then, this graph is not a D-graph, since if we remove the red edge it is not 3-colorable. Furthermore, removing any vertex also does not yield a D-graph, as removing any vertex will induce 3-colorability.

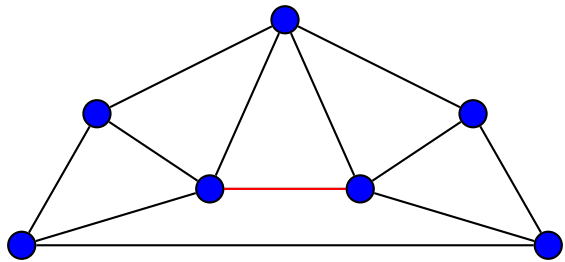
\begin{figure}[htp]
    \centering
            \begin{tikzpicture}[>=stealth, node distance=1.2cm and 1.5cm, every node/.style={circle, draw, minimum size=0.3cm}, thick]

  \node[fill=blue] (1) at (0,3) {};
  \node[fill=blue] (2) at (-2.5,1.75) {};
  \node[fill=blue] (3) at (2.5,1.75) {};
  \node[fill=blue] (4) at (-3.5,0) {};
  \node[fill=blue] (5) at (3.5,0) {};
  \node[fill=blue] (6) at (-1,0.75) {};
  \node[fill=blue] (7) at (1,0.75) {};

  \draw (1) -- (2);
  \draw (1) -- (3);
  \draw (1) to  (6);
  \draw (1) -- (7);
  \draw (6) -- (7);
  \draw (2) -- (4);
  \draw (3) -- (5);
  \draw (4) -- (5);
  \draw (2) to (6);
  \draw (3) -- (7);
  \draw (4) -- (6);
  \draw (5) -- (7);
  \draw[red] (6) -- (7);

\end{tikzpicture}
    \caption{Moser spindle with the added red edge}
    \label{fig:moser}
\end{figure}

We wish to create an infinite family of such graphs. The following construction is by Thomas and Walls \cite{TWalls}. Consider the following series of graphs, shown in figure \ref{fig:F_graphs}.
Let $F_0 = K_4$, the complete graph with 4 vertices. Informally, we will construct subsequent iterations by ``replacing'' edges with the kite graph (\ref{fig:kite} \cite{Brandstadt04}). Let $F_1$ be constructed as follows:
\begin{enumerate}

    \item Remove any edge $(u_i,u_j)$ in $F_0$ shown in in figure \ref{fig:graph0}.
    \item Add new vertices $v_2$, $v_3$, and $v_4$ shown in figure \ref{fig:kite} which are all connected to each other.
    \item Add edges $(u_i,v_2)$, $(u_j,v_3)$, and $(u_j, v_4)$; The resulting graph is shown in figure \ref{fig:graph1}
\end{enumerate}
To create $F_{n+1}$ from $F_n$, we replace the edge shown as $(v_3, v_4)$ in figure \ref{fig:graph1} with another kite graph by repeating steps 2 and 3 with the equivalent vertices from $F_1$.

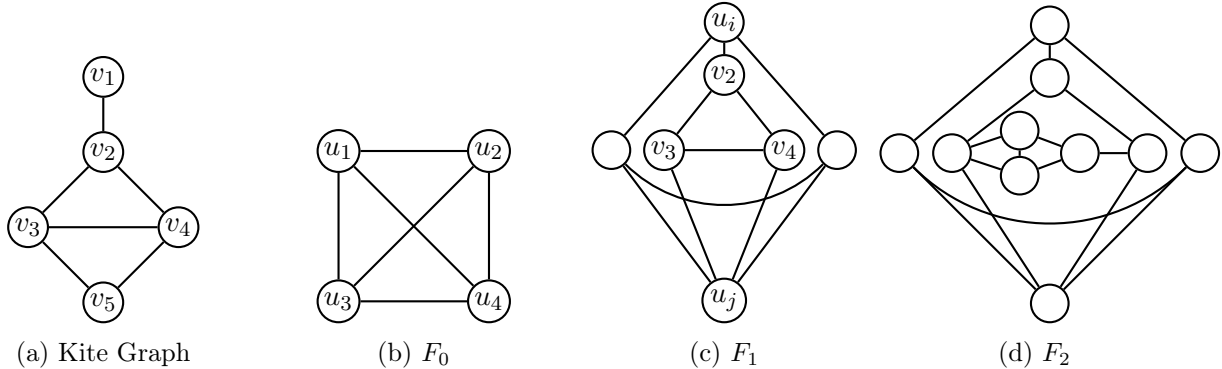
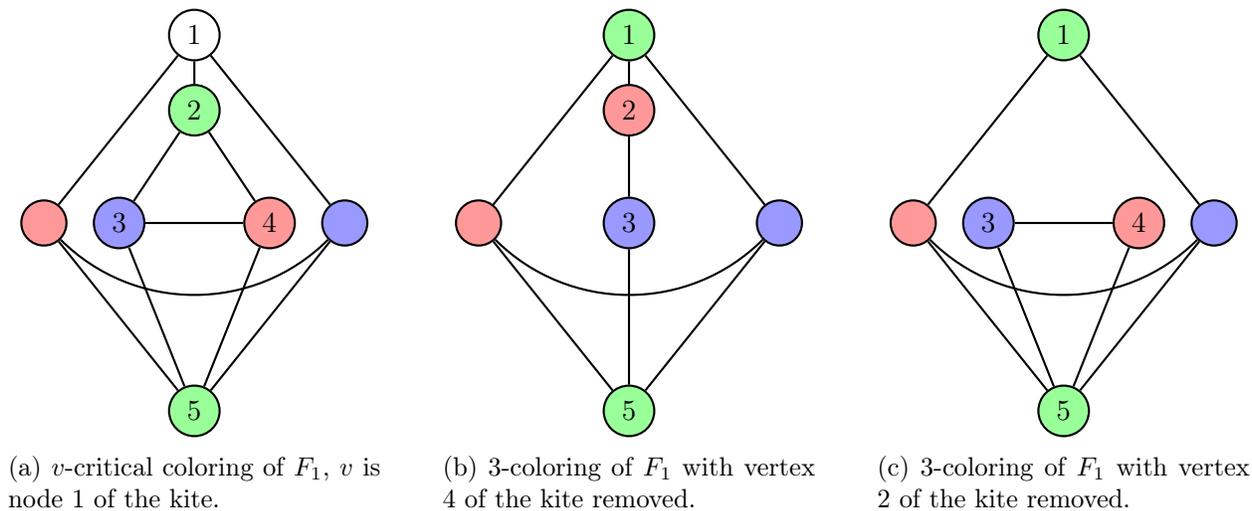
\begin{figure}[htp!]
    \centering
    \begin{subfigure}[b]{0.25\textwidth}\label{kite}
        \centering
        \begin{tikzpicture}[>=stealth, node distance=1.2cm and 1.5cm, every node/.style={circle, draw, minimum size=0.5cm, inner sep=0.04cm}, thick]

        \node (1) at (0, 3) {$v_1$};
        \node (2) at (0, 2) {$v_2$};
        \node (3) at (-1, 1) {$v_3$};
        \node (4) at (1, 1) {$v_4$};
        \node (5) at (0, 0) {$v_5$};

        \draw (1) to (2);
        \draw (2) to (4);
        \draw (2) to (3);
        \draw (3) to (4);
        \draw (3) to (5);
        \draw (4) to (5);
        \end{tikzpicture}
        \caption{Kite Graph}
        \label{fig:kite}
    \end{subfigure}\hfill
    \begin{subfigure}[b]{0.25\textwidth}
        \centering
        \begin{tikzpicture}[>=stealth, node distance=1.2cm and 1.5cm, every node/.style={circle, draw, minimum size=0.5cm, inner sep=0.04cm}, thick]
        \node (6) at (-1, 1) {$u_1$};
        \node (7) at (1, 1) {$u_2$};
        \node (8) at (-1, -1) {$u_3$};
        \node (9) at (1, -1) {$u_4$};

        \draw (6) to (7);
        \draw (6) to (8);
        \draw (8) to (9);
        \draw (9) to (7);
        \draw (6) to (9);
        \draw (7) to (8);

        \end{tikzpicture}
        \caption{$F_0$}
        \label{fig:graph0}
    \end{subfigure}\hfill
    \begin{subfigure}[b]{0.25\textwidth}
        \centering
       \begin{tikzpicture}[>=stealth, node distance=1.2cm and 1.5cm, every node/.style={circle, draw, minimum size=0.5cm, inner sep=0.04cm}, thick]

\node (1) at (0,3.7) {$u_i$};
  \node (2) at (-1.5,2) {};
  \node (3) at (1.5,2) {};
  \node (4) at (0,0) {$u_j$};

  \node (5) at (0,3) {$v_2$};
  \node (6) at (-.8,2) {$v_3$};
  \node (7) at (.8,2) {$v_4$};

  \draw (1) -- (2);
  \draw (1) -- (3);
  \draw (2) to[bend right=45] (3);
  \draw (2) -- (4);
  \draw (3) -- (4);

  \draw (1) -- (5);
  \draw (5) -- (6);
  \draw (5) -- (7);
  \draw (6) -- (7);
  \draw (6) -- (4);
  \draw (7) -- (4);

\end{tikzpicture}
        \caption{$F_1$}
        \label{fig:graph1}
    \end{subfigure}\hfill
    \begin{subfigure}[b]{0.25\textwidth}
        \centering
          \begin{tikzpicture}[>=stealth, node distance=1.2cm and 1.5cm, every node/.style={circle, draw, minimum size=0.5cm, inner sep=0.04cm}, thick]

\node (1) at (0,3.7) {};
  \node (2) at (-2,2) {};
  \node (3) at (2,2) {};
  \node (4) at (0,0) {};

  \node (5) at (0,3) {};
  \node (6) at (-1.3,2) {};
  \node (7) at (1.3,2) {};

  \node (8) at (.4, 2) {};
  \node (9) at (-.4, 2.3) {};
  \node (10) at (-.4, 1.7) {};

  \draw (1) -- (2);
  \draw (1) -- (3);
  \draw (2) to[bend right=45] (3);
  \draw (2) -- (4);
  \draw (3) -- (4);

  \draw (1) -- (5);
  \draw (5) -- (6);
  \draw (5) -- (7);
  \draw (6) -- (4);
  \draw (7) -- (4);

  \draw (6) -- (9);
    \draw (6) --(10);
    \draw (9) -- (10);
    \draw (9) to (8);
    \draw (10) to (8);
    \draw (8) to (7);

\end{tikzpicture}
        \caption{$F_2$}
        \label{fig:houghspace}
    \end{subfigure}
    \caption{Graphs where the central edge is replaced by the kite graph}
    \label{fig:F_graphs}
\end{figure}

These graphs are all 4-vertex-critical \cite{4critproof}, so we know that they do not have subgraphs that are D-graphs. They themselves may be D-graphs, however we can account for this possibility. Observe that $F_n$ has $n$ vertices with degree 4, and each graph has $4+3n$ vertices, leaving $4+2n$ vertices of degree 3. This is because whenever we add a kite, we change one vertex from degree 3 to degree 4 and add three new vertices of degree 3. Thus, we can freely add an edge to a pair of degree 3 vertices without inducing a D-graph as a subgraph (as adding edges will keep the 4-vertex-critical property), and by doing this we eliminate the possibility of $F_n$ being a D-graph.

Observe that for any vertex $v$ in a 4-vertex-critical graph $G$, $G$ can be 4-colored with the special criterion that one of the colors is used only once, to color $v$. To see this, remove $v$, 3-color the graph, and then add $v$ back and color it with the fourth color. Denote such a coloring a $v$-critical-coloring of $G$. Note that this is an equivalent characterization of vertex-criticality.
\begin{figure}[htp!]
    \centering
    \begin{subfigure}[b]{0.30\textwidth}
        \centering
        \begin{tikzpicture}[>=stealth, node distance=1.2cm and 1.5cm, every node/.style={circle, draw, minimum size=0.6cm}, thick]

  \node[fill=white!40] (1) at (0,5) {1};
  \node[fill=red!40] (2) at (-2,2.5) {};
  \node[fill=blue!40] (3) at (2,2.5) {};
  \node[fill=green!40] (4) at (0,0) {5};

  \node[fill=green!40] (5) at (0,4) {2};
  \node[fill=blue!40] (6) at (-1,2.5) {3};
  \node[fill=red!40] (7) at (1,2.5) {4};

  \draw (1) -- (2);
  \draw (1) -- (3);
  \draw (2) to[bend right=45] (3);
  \draw (2) -- (4);
  \draw (3) -- (4);

  \draw (1) -- (5);
  \draw (5) -- (6);
  \draw (5) -- (7);
  \draw (6) -- (7);
  \draw (6) -- (4);
  \draw (7) -- (4);

\end{tikzpicture}
        \caption{$v$-critical coloring of $F_1$, $v$ is node 1 of the kite.}
        \label{fig:v-critical}
    \end{subfigure}\hfill
    \begin{subfigure}[b]{0.30\textwidth}
        \centering
                \begin{tikzpicture}[>=stealth, node distance=1.2cm and 1.5cm, every node/.style={circle, draw, minimum size=0.6cm}, thick]

  \node[fill=green!40] (1) at (0,5) {1};
  \node[fill=red!40] (2) at (-2,2.5) {};
  \node[fill=blue!40] (3) at (2,2.5) {};
  \node[fill=green!40] (4) at (0,0) {5};

  \node[fill=red!40] (5) at (0,4) {2};
  \node[fill=blue!40] (6) at (0,2.5) {3};

  \draw (1) -- (2);
  \draw (1) -- (3);
  \draw (2) to[bend right=45] (3);
  \draw (2) -- (4);
  \draw (3) -- (4);

  \draw (1) -- (5);
  \draw (5) -- (6);
  \draw (6) -- (4);

\end{tikzpicture}
        \caption{3-coloring of $F_1$ with vertex 4 of the kite removed.}
        \label{fig:3-color-no-v4}
    \end{subfigure}\hfill
    \begin{subfigure}[b]{0.3\textwidth}
        \centering
        \begin{tikzpicture}[>=stealth, node distance=1.2cm and 1.5cm, every node/.style={circle, draw, minimum size=0.6cm}, thick]

  \node[fill=green!40] (1) at (0,5) {1};
  \node[fill=red!40] (2) at (-2,2.5) {};
  \node[fill=blue!40] (3) at (2,2.5) {};
  \node[fill=green!40] (4) at (0,0) {5};

  \node[fill=blue!40] (6) at (-1,2.5) {3};
  \node[fill=red!40] (7) at (1,2.5) {4};

  \draw (1) -- (2);
  \draw (1) -- (3);
  \draw (2) to[bend right=45] (3);
  \draw (2) -- (4);
  \draw (3) -- (4);

  \draw (6) -- (7);
  \draw (6) -- (4);
  \draw (7) -- (4);

\end{tikzpicture}
        \caption{3-coloring of $F_1$ with vertex 2 of the kite removed.}
        \label{fig:3-color-no-v2}
    \end{subfigure}\hfill
    \caption{Graphs where the central edge is replaced by the kite graph}
    \label{fig:colorings}
\end{figure}
\begin{lemma}
Every graph $F_n$ has chromatic number four.
\label{lem-fn-4color}
\end{lemma}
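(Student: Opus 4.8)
The plan is to induct on $n$, with the entire weight of the argument resting on a single structural fact about how the kite gadget of Figure~\ref{fig:kite} constrains $3$-colorings. Writing $\chi$ for the chromatic number, I would prove $\chi(F_n)=4$ by separately establishing the lower bound $\chi(F_n)\ge 4$ (equivalently, $F_n$ is not $3$-colorable) and the upper bound $\chi(F_n)\le 4$. The base case is free: $F_0=K_4$ has $\chi(K_4)=4$, settling both bounds at $n=0$.

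The key lemma I would isolate concerns the kite and its two attachment vertices $v_1$ and $v_5$: in every proper $3$-coloring of the kite, $v_1$ and $v_5$ receive \emph{distinct} colors, and conversely every assignment of two distinct colors to $v_1,v_5$ extends to a proper $3$-coloring of the whole kite. The first half is a short forcing argument. Since $\{v_2,v_3,v_4\}$ is a triangle it uses all three colors, and since $\{v_3,v_4,v_5\}$ is also a triangle, $v_5$ must take the one color not used on $v_3,v_4$, which is precisely the color of $v_2$; as $v_1$ is adjacent to $v_2$ we get $\phi(v_1)\neq\phi(v_2)=\phi(v_5)$. The converse is by inspection: given distinct colors on $v_1,v_5$, set $\phi(v_2)=\phi(v_5)$ (valid inside the kite because $v_2$'s only neighbors there are $v_1,v_3,v_4$ and $\phi(v_1)\neq\phi(v_5)$) and give $v_3,v_4$ the two remaining colors. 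Thus, relative to its attachment vertices, the kite imposes exactly the same $3$-coloring constraint as a single edge.

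With this lemma the inductive step is immediate. Recall $F_{n+1}$ is obtained from $F_n$ by deleting an edge $(a,b)$ and gluing in a fresh kite with $a=v_1$ and $b=v_5$ (the endpoints identified with the two degree-small tips, exactly as in the construction of $F_1$ from $F_0$). Given any proper $3$-coloring of $F_{n+1}$, its restriction to $V(F_n)$ is a proper $3$-coloring of $F_n$: every edge of $F_n$ other than $(a,b)$ is still present in $F_{n+1}$, while the lemma guarantees $\phi(a)\neq\phi(b)$, which reinstates the constraint of the one deleted edge. Hence if $F_{n+1}$ were $3$-colorable then so would be $F_n$, contradicting the inductive hypothesis; the extendability half of the lemma confirms the construction is nondegenerate. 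This yields $\chi(F_n)\ge 4$ for all $n$.

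For the upper bound I would invoke Brooks' theorem: each $F_n$ has maximum degree $4$, and for $n\ge 1$ it has at least $4+3n\ge 7$ vertices, so with maximum degree only $4$ it is not a complete graph, nor is it an odd cycle; hence $\chi(F_n)\le 4$, with $n=0$ handled directly by $K_4$. (Alternatively, the extendability half of the kite lemma lets one carry a $4$-coloring from $F_n$ to $F_{n+1}$, giving the same bound without Brooks.) Combining the two bounds gives $\chi(F_n)=4$. I expect the only real obstacle to be stating and verifying the kite-forcing lemma cleanly — especially checking that identifying $a,b$ with $v_1,v_5$ creates no adjacencies that would spoil either direction of the edge-equivalence — since once that gadget lemma is in hand, both bounds follow by a one-line induction.
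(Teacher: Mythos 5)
Your proof is correct and follows essentially the same route as the paper's: an induction on $n$ whose engine is the observation that the two triangles of the kite force its attachment vertices to receive distinct colors in any proper $3$-coloring, so the gadget behaves exactly like the edge it replaces. Your packaging is arguably cleaner (a two-directional gadget-equivalence lemma, versus the paper's argument that the monochromatic edge migrates to $(u_i,v_2)$), and your use of Brooks' theorem for the upper bound is a valid shortcut where the paper instead extends a $4$-coloring directly across each new kite.
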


\begin{proof}
    First, we must show that $F_{n}$ is not 3-colorable for any $n \geq 0$. Observe that in order for the kite replacement technique to induce 3-colorability in a non-3-colorable graph, it would have to replace a critical edge and allow the two nodes corresponding to that edge to be the same color as each other. Denote the two nodes as $u_i$ and $u_j$ to allow for reference to figure \ref{fig:graph1}. Then color the original graph such that ($u_i$,$u_j$) is the only edge with two same-colored endpoints. This is possible because $(u_i,u_j)$ is a critical edge. Let us say the colors are $C_1$, $C_2$, and $C_3$, and $u_i$ and $u_j$ are both $C_1$. Then $v_3$ and $v_4$ from the new kite must be $C_2$ and $C_3$, forcing $v_2$ to be $C_1$. Therefore the addition of a kite in place of a critical edge cannot induce 3-colorability because $(u_i,v_2)$ is the new critical edge that plays the same role as the original edge.
    
    Clearly $F_0$ is not 3-colorable. This can be easily verified by noticing that all four vertices must be different from the other three, and that is not possible with only three colors. Then using the above logic, it follows by induction that $F_n$ is not 3-colorable for any $n \geq 0$.

    Now, we prove by induction that all graphs $F_{n}$ are 4-colorable. Observe that $F_{0}$ shown in figure \ref{fig:graph0} is 4-colorable since it has four vertices. Now, assume that graph $F_{k-1}$ is 4-colorable. To prove that graph $F_{k}$ is 4-colorable, notice that the 3 added vertices of the new kite can be colored in such a way so as to preserve 4-colorability while leaving the other vertices unchanged. To see this, we make reference to Figure \ref{fig:graph1}, where the 3 added vertices are $v_{2}$, $v_{3}$, and $v_{4}$. Color vertex $v_{2}$ differently from vertex $v_{1}$, and color vertices $v_{3}$ and $v_{4}$ differently from vertices $v_{2}$ and $v_{5}$ and differently from each other. The resulting coloring of $F_{k}$ is a valid 4-coloring.
\end{proof}

\begin{theorem}
Every graph $F_n$ is 4-vertex-critical.
\end{theorem}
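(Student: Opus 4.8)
The plan is to combine Lemma~\ref{lem-fn-4color} with an induction on $n$. Since Lemma~\ref{lem-fn-4color} already gives $\chi(F_n)=4$, it remains to show that every proper vertex-induced subgraph of $F_n$ has chromatic number at most $3$. I would first reduce this to single-vertex deletions: if $V'\subsetneq V(F_n)$, pick a vertex $w\in V(F_n)\setminus V'$; the subgraph induced on $V'$ is an induced subgraph of $F_n - w$, hence it is $3$-colorable as soon as $F_n - w$ is. So it suffices to prove that $F_n - w$ is $3$-colorable for every vertex $w$, which together with $\chi(F_n)=4$ is exactly $4$-vertex-criticality.

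The induction must be strengthened. Writing $e_n=(a,b)$ for the edge of $F_n$ that the construction replaces to form $F_{n+1}$, I would carry the extra invariant that $e_n$ is \emph{edge-critical}, i.e.\ $F_n - e_n$ is $3$-colorable. This is precisely the fact the proof of Lemma~\ref{lem-fn-4color} used tacitly, and --- as explained below --- plain vertex-criticality does not by itself allow me to delete the apex vertex of the new kite. The base case $F_0=K_4$ is immediate: $K_4 - w = K_3$ is $3$-colorable for every $w$, and $K_4 - e$ is $3$-colorable for every edge $e$.

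For the inductive step, form $F_{n+1}$ from $F_n$ by deleting $e=(a,b)$ and adding a triangle on new vertices $p,q,r$ together with the edges $ap$, $qb$, and $rb$ (so $a$ attaches to the apex $p$, while $q$ and $r$ attach to $b$). The key gadget fact is that any proper $3$-coloring of a graph containing $a$ and $b$ in which $a$ and $b$ receive distinct colors extends over $p,q,r$: color $q,r$ with the two colors different from $b$, which forces $p$ to the color of $b$, and this is automatically different from $a$. With this in hand I case on the deleted vertex $w$ of $F_{n+1}$. If $w$ is an old vertex other than $a,b$, take a $3$-coloring of $F_n - w$ from the inductive vertex-criticality; the edge $ab$ survives there, so $a$ and $b$ get distinct colors and the gadget fact completes the extension. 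If $w=a$ or $w=b$, part of the gadget detaches from the old graph, and I color the small surviving gadget (a triangle with two pendant edges to $b$, or a triangle with one pendant edge to $a$) directly on top of a $3$-coloring of $F_n-w$. If $w\in\{p,q,r\}$, the old part of $F_{n+1}-w$ is exactly $F_n - e$, so I invoke the strengthened invariant to $3$-color it and then color the one or two surviving new vertices, which is always possible with three colors. Finally I verify that the invariant propagates: the edge $(q,r)$ of the new kite is the edge replaced to build $F_{n+2}$, and $F_{n+1}-(q,r)$ is $3$-colorable because deleting $qr$ turns the triangle into a path $q$--$p$--$r$ that extends any $3$-coloring of $F_n - e$.

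The main obstacle is the apex case $w=p$: here the old part of $F_{n+1}-p$ is $F_n$ with the \emph{edge} $e$ deleted rather than a vertex deleted, so the inductive vertex-criticality is useless and one genuinely needs $F_n - e$ to be $3$-colorable. Recognizing that the induction must therefore be strengthened with the edge-critical invariant --- and checking that this invariant is preserved under the kite replacement --- is the crux of the argument; the remaining cases reduce to routine $3$-colorings of the small kite gadget.
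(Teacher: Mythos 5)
Your proof is correct, but it organizes the induction differently from the paper. Both arguments induct on $n$, reduce $4$-vertex-criticality to single-vertex deletions, and use the same gadget fact (a $3$-coloring with the two attachment vertices colored differently extends over the kite by giving the apex the color of the doubly-attached endpoint). The difference is in how the new-vertex cases are handled. You strengthen the induction hypothesis with an explicit invariant --- the spliced edge $e_n$ is edge-critical, so $F_n - e_n$ is $3$-colorable --- apply it directly when $w\in\{p,q,r\}$, and separately check that the invariant propagates to $(q,r)$ in $F_{n+1}$. The paper instead works only with $v$-critical colorings (equivalently, vertex-criticality) and, when a kite vertex is deleted, recovers the needed $3$-coloring of the old part by taking the $3$-coloring of $F_{k-1}$ minus the apex's attachment vertex $u_i$ and recoloring $u_i$, exploiting that $u_i$ has only two neighbors outside the kite. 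In effect the paper re-derives your invariant on the fly each step from vertex-criticality plus a degree count, whereas you carry it explicitly. For this reason your remark that ``the inductive vertex-criticality is useless'' in the apex case is slightly overstated --- it does suffice once combined with the observation that $u_i$ has degree $3$ --- but your formulation is arguably cleaner, avoids that degree argument, and makes the crux of the induction (why $F_n - e_n$ is $3$-colorable) explicit rather than implicit. Your treatment of the deleted-endpoint cases $w=a$ and $w=b$ is also more careful than the paper's, which glosses over the fact that its extension rule (``vertex 2 gets the color of vertex 5'') would duplicate the fourth color when $v=u_j$.
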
 
\begin{proof}

The base case is $F_0$, which is trivially 4-vertex critical, as removing any vertex gives $K_3.$
Now, suppose that $F_{k-1}$ is 4-vertex-critical, and consider $F_{k}$. We claim the following:
\begin{itemize}
    \item All previous critical vertices (the entire graph) remain critical vertices.
    \item The added vertices of the kite graph (vertices 2, 3, and 4) are critical vertices.
\end{itemize}

From now on we will refer to vertices of the kite using the indices shown in Figure \ref{fig:colorings}. We begin with the first claim. Let $v$ be a vertex of $F_{k-1}$ and consider the $v$-critical coloring of $F_{k-1}$. Then we can furnish a $v$-critical coloring of $F_{k}$ in a straightforward manner: vertex 2 must have the same color as vertex 5, and vertices 3 and 4 are each one of the other two colors.

To prove the second claim, consider a $v$-critical coloring $\phi$ where $v$ is vertex 1 of the kite graph inserted to make $F_k$ (Figure~\ref{fig:v-critical}).\footnote{Figure \ref{fig:colorings} depicts $F_1$, but we can use it to examine the inner subgraph of any $F_k$.} To show the vertex criticality of the added vertices, we furnish 3-colorings of the graph when they are removed. To construct a 3-coloring of $F_{k}$ with some arbitrary vertex removed from the 4-coloring $\phi$ of $F_{k-1}$, we need to recolor the vertex with the 4th color (vertex 1 of the kite) as well as any added vertices.

We begin by coloring vertex 1, which always has exactly two neighbors external to the kite graph. We color it accordingly, avoiding edges that connect two vertices of the same color. If vertex 4 is removed, vertices 2 and 3 are now of degree 2, so coloring them is trivial, and we have a 3-coloring of $F_n$ (Figure~\ref{fig:3-color-no-v4}). This proof applies if vertex 3 were removed instead, so both vertices 3 and 4 of the kite are critical vertices.

Suppose instead that vertex 2 of the kite is removed. Then vertices 1, 3, and 4 are all of degree 2 and can be colored trivially as shown in Figure~\ref{fig:3-color-no-v2}. Thus vertex 2 is critical. Therefore, all vertices of $F_k$ are critical. Then alongside Lemma~\ref{lem-fn-4color} this gives that $F_k$ is 4-vertex critical if $F_{k-1}$ is. By induction, the result is proven.
\end{proof}
Therefore there exists an infinite family of graphs of chromatic number four and maximum degree four such that no vertex-induced subgraph is a D-graph.

\subsection{Usage of Deng's Lemma 3}

Before describing how Lemma 3 is used in Deng's Theorem 1.1, we must define two terms Deng uses: D-coloring and D-coloring-matrix. A D-coloring with respect to an edge $e_{i, j}$ connecting the vertices $v_{i}$ and $v_{j}$ is a set of colorings for a D-graph such that it has no edge with two same-colored endpoints except $e_{i,j}$. By definition, every edge of a D-graph has a D-coloring.

To define a D-coloring matrix, let three variables $e_{i}$, $f_{i}$, $g_{i}$ denote the color of a given vertex $v_{i}$. Coloring it red corresponds to $e_{i}=1$, $f_{i}=0$, $g_{i}=0$, yellow to $e_{i}=0$, $f_{i}=1$, $g_{i}=0$, and blue to $e_{i}=0$, $f_{i}=0$, $g_{i}=1$. There are 6 equivalent permutations of a coloring, and these permutations are represented by $e_{i}^{(k)}$, $f_{i}^{(k)}$, $g_{i}^{(k)}$ for $k \in \{1, \dots, 6\}$. Let $X = [X_{1} \  X_{2} \ \dots X_{n} \ 1]$ $X_{i} = [x_{i} \ y_{i} \ z_{i}]$ be a D-coloring with respect to the edge $e_{i, j}$. Then the corresponding D-coloring matrix $L$ with respect to edge $e_{i, j}$ is defined by

\[
\sum_{k=1}^{6} (1 + \sum_{i=1}^{n} (e_{i}^{(k)}x_{i} + f_{i}^{(k)}y_{i} + g_{i}^{(k)}z_{i}))^{2} = XLX^{T}.
\]

Lemma 3 is used to prove the necessity (i.e. Claim 2) of Theorem 1.1. Using the lemma, Deng selects a D-graph $K$ as a vertex-induced subgraph of G, and he notes that there exists a matrix $Z^{(0)}(K)$ satisfying the dual $R^*(G)$ of $R(G)$ as defined in Theorem 3.1. He states that for all vertices $v_{i}$ in $K$, we can obtain a D-coloring matrix $A_i$ for any edge associated with $v_i$. Furthermore, for each vertex $v_i$, Deng constructs a second D-coloring matrix $B_i$ by changing the color of only vertex $v_i$ while keeping all other vertex colors the same. Next, Deng constructs a D-coloring matrix $L_{e_j}$ for each edge.  Deng then composes these matrices as follows, where $s$ is the order of K: \begin{align}
    W &= \sum_{i=1}^s (a_i A_i + b_i B_i).\\
    Y &= \sum_{e_j \in E(K)} c_{e_j} L_{e_j}.
\end{align}

Since we are not guaranteed such a D-graph K, or an initial D-coloring, we are not guaranteed to be able to find such an $A_i$ for each vertex, a second D-coloring, or a D-coloring matrix for each edge.

Therefore we are not guaranteed the existence of W or Y. Thus, the rest of Deng's conclusions do not follow. We cannot construct a kernel of these matrices, and we cannot find coefficients such that $Z^{(0)}(K) + W + Y$ is completely positive, since none of these matrices are guaranteed to exist, and so we cannot take these matrices to find a solution to Theorem 3.3 via copositive programming, as Deng intends. This solution to Theorem 3.3 is needed to derive a necessary contradiction, as Theorem 3.3 states that for any D-graph $G$, no matrix exists that satisfies the constraints of the programming problem $K^{*}(G)$. Thus, Deng's proof fails, and necessity is not established.

\section{Conclusion}
Deng presents a convex optimization problem, and we find, interestingly, that Claim~\ref{dsuff} appears to hold. However, we find significant errors in the proof of Claim~\ref{dnecc}, specifically with Deng's construction and claims about D-graphs. Deng also fails to properly justify the construction of the dual to the Deng SDP, and he makes a number of other unjustified or poorly justified assertions. Therefore, Deng's work fails to demonstrate that $\text{P} = \text{NP}$.

\paragraph{Acknowledgments}
We would like to thank Michael Reidy, Nicholas DeJesse, Spencer Lyudovyk, Tran Duy Anh Le, and Lane Hemaspaandra 
for their helpful comments on prior drafts.
The authors are responsible for any remaining errors.

\bibliographystyle{alpha}
\bibliography{citations}
\end{document}